% A Theory of Entanglement
% by Stan Gudder

% begin 3.04.19
% resstart 4.4.19
% finish 4.8.19
% spell check 4.9.19
% correct 4-10-12.19
% 
% ?.19 to DU
% 4.13.19 ArXiv submit/
% 4.?.19 ArXiv permanent identifier quant-ph 

% file "theoryEntangle" from Stan Gudder
% prepared by Paula Gudder using LATeX
% on a Macintosh computer w/TexShop
% contact us at sgudder@du.edu

%Style section
\documentclass[12pt,letterpaper]{article}

% 2.20.19 baselinestretch @ 1.1 added to accommodate subscripts
% 4.6.19 to 1.2 - Example 2 - fraction & square root collide - line 216
   
%
\usepackage{amsmath,amsfonts,amsthm,amssymb,stmaryrd,relsize}
%\topmargin -1pc

%\usepackage{mathpazo} - tried 4.19 - see \minusone below
%\DeclareMathSymbol{\shortminus}{\mathbin}{AMSa}{"39}

% Declaration section
\theoremstyle{plain}

\numberwithin{equation}{section}

\newtheorem{thm}{Theorem}[section]
\newtheorem{lem}[thm]{Lemma}
\newtheorem{cor}[thm]{Corollary}

% numbering is hardwired
\newenvironment{exam}[1]%  be sure to add \hfill\qedsymbol before \end{exam}
{\begin{flushleft}\textbf{Example #1}.\enspace}%
{\end{flushleft}}

\allowdisplaybreaks  % introduced 7.15.15

\newcommand{\complex}{{\mathbb C}}
\newcommand{\Natural}{{\mathbb N}}
\newcommand{\real}{{\mathbb R}}

\newcommand{\minusone}{{\mathbin{\text{--}1}}} % 4.19

\newcommand{\underu}{\underline{u}}
\newcommand{\underv}{\underline{v}}
\newcommand{\underw}{\underline{w}}
\newcommand{\underlambda}{\underline{\lambda}}
\newcommand{\underalpha}{\underline{\alpha}}
\newcommand{\underbeta}{\underline{\beta}}
\newcommand{\overalpha}{\overline{\alpha}}

\newcommand{\rmtr}{\mathrm{tr\,}}
\newcommand{\rmsupp}{\mathrm{supp\,}}
\newcommand{\rmre}{\mathrm{Re\,}}
\newcommand{\ctimes}{\mathrel{\mathlarger\cdot}}

\newcommand{\ascript}{\mathcal{A}}
\newcommand{\bscript}{\mathcal{B}}
\newcommand{\dscript}{\mathcal{D}}
\newcommand{\lscript}{\mathcal{L}}
\newcommand{\sscript}{\mathcal{S}}

\newcommand{\ab}[1]{\left|#1\right|}
\newcommand{\doubleab}[1]{\left|\left|#1\right|\right|}
\newcommand{\brac}[1]{\left\{#1\right\}}
\newcommand{\paren}[1]{\left(#1\right)}
\newcommand{\sqbrac}[1]{\left[#1\right]}
\newcommand{\elbows}[1]{{\left\langle#1\right\rangle}}
\newcommand{\ket}[1]{{\left|#1\right>}}
\newcommand{\bra}[1]{{\left<#1\right|}}

\errorcontextlines=0

\begin{document}

\title{A THEORY OF ENTANGLEMENT}
\author{Stan Gudder\\ Department of Mathematics\\
University of Denver\\ Denver, Colorado 80208\\
sgudder@du.edu}
\date{}
\maketitle

\begin{abstract}
This article presents the basis of a theory of entanglement. We begin with a classical theory of entangled discrete measures in Section~1. Section~2 treats quantum mechanics and discusses the statistics of bounded operators on a Hilbert space in terms of context coefficients. In Section~3 we combine the work of the first two sections to develop a general theory of entanglement for quantum states. A measure of entanglement called the entanglement number is introduced. Although this number is related to entanglement robustness, its motivation is not the same and there are some differences. The present article only involves bipartite systems and we leave the study of multipartite systems for later work.
\end{abstract}

\section{Entangled Probability Measures}  % Section 1
Entangled states are considered to be an important resource for quantum computation and information  processes \cite{bus03,hz12,nc00}. Various authors have developed theories of entanglement \cite{cla05,hhhh09,pv07,ste03} and this article is another attempt. Our motivation is a bit different and we hope this work will be useful.

It is frequently stated that entanglement is a strictly quantum phenomenon and it is not present in a classical theory. We do not believe this is actually true and begin with a classical theory of entangled measures. This theory is quite simple and does not have the depth of its quantum counterpart. However, we believe that it can be instructive and give insights into the quantum theory.

Classical statistical systems are described by probability measures in a measure space. For simplicity we consider the set of probability measures $M$ on the set of natural numbers $\Natural$. We consider $\underu\in M$ as a probability vector
$\underu =\brac{u_i\colon i\in\Natural}$, $u_i\ge 0$, $\sum u_i=1$. Thinking of $M$ as a subset of the real Hilbert space
\begin{equation*}
\ell _2=\brac{f\colon\Natural\to\real ,\sum\ab{f(i)}^2<\infty}
\end{equation*}
we write $\doubleab{\underu}^2=\sum u_i^2$ and $\elbows{\underu ,\underv}=\sum u_iv_i$. The \textit{support} of $\underu$ is defined by
\begin{equation*}
\rmsupp (\underu )=\brac{i\in\Natural\colon u_i\ne 0}
\end{equation*}
The \textit{entanglement index} of $\underu$ is the cardinality of $\rmsupp (\underu )$ and is denoted by $n(\underu )$. We define the
\textit{entanglement number} of $\underu$ by $e(\underu )=\paren{1-\doubleab{\underu}^2}^{1/2}$. We can also write
\begin{equation}                % equation (1.1)
\label{eq11}
e(\underu )=\paren{\sum _{i\ne j}u_iu_j}^{1/2}=\sqbrac{\sum _iu_i(1-u_i)}^{1/2}
\end{equation}
Notice that $\doubleab{\underu}^2$ is the expectation of the random variable $u_i$ relative to the measure $\underu$ and that by \eqref{eq11}, $e(\underu )^2$ is the expectation of the random variable $1-u_i$. That is, $e(\underu )^2$ is the average distance of $\underu$ from unity. This is our first (among many) interpretations of $e(\underu )$. We say that $\underu$ is a \textit{point} (or \textit{Dirac}) measure if $u_i=1$ for some $i\in\Natural$. Of course, it follows that $u_j=0$ for $j\ne i$. We say that $\underu$ is \textit{uniform} if $u_i=u_j$ whenever $u_i,u_j\ne 0$. If $\underu$ is uniform, then $n(\underu )<\infty$ and $u_i=1/n(\underu )$ whenever $u_i\ne 0$. The proof of the following result is standard.

\begin{thm}    % Theorem 1.1
\label{thm11}
{\rm{(a)}}\enspace $e(\underu )=0$ if and only if $\underu$ is a point measure.
{\rm{(b)}}\enspace If $n(\underu )<\infty$, then $e(\underu )\le\sqbrac{\paren{n(\underu)-1}/n(\underu )}^{1/2}$ and equality is achieved if and only if $\underu$ is uniform.
\end{thm}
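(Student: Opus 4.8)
The plan is to handle each part by an elementary inequality applied to the support of $\underu$, exactly as the remark ``standard'' suggests. For part (a), I would work from the rightmost expression in \eqref{eq11}, namely $e(\underu)^2=\sum_i u_i(1-u_i)$. Since $\underu$ is a probability vector, each coordinate satisfies $0\le u_i\le 1$, so every summand $u_i(1-u_i)$ is nonnegative. Hence $e(\underu)=0$ forces each term to vanish, which means $u_i\in\brac{0,1}$ for every $i$; combined with $\sum u_i=1$, this forces exactly one coordinate to equal $1$ and all others to vanish, i.e.\ $\underu$ is a point measure. Conversely, a point measure has $\doubleab{\underu}^2=1$, whence $e(\underu)=0$, giving the equivalence.

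For part (b), write $S=\rmsupp(\underu)$ and $n=n(\underu)=\ab{S}$, which is finite by hypothesis. The key step is a lower bound on $\doubleab{\underu}^2=\sum_{i\in S}u_i^2$. Applying the Cauchy--Schwarz inequality to the vector $(u_i)_{i\in S}$ against the constant vector $(1)_{i\in S}$ gives $1=\paren{\sum_{i\in S}u_i}^2\le n\sum_{i\in S}u_i^2=n\doubleab{\underu}^2$, so that $\doubleab{\underu}^2\ge 1/n$. Substituting this into $e(\underu)^2=1-\doubleab{\underu}^2$ yields $e(\underu)^2\le (n-1)/n$, which is precisely the claimed bound after taking square roots.

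The only genuinely delicate point is the equality analysis, and this is where I would be most careful. Equality in Cauchy--Schwarz holds exactly when $(u_i)_{i\in S}$ is proportional to the constant vector, that is, when all the nonzero coordinates of $\underu$ are equal; by the normalization $\sum_{i\in S}u_i=1$ this common value must be $1/n$, which is the definition of $\underu$ being uniform. A direct check confirms the converse: if $\underu$ is uniform then $\doubleab{\underu}^2=n\cdot(1/n)^2=1/n$, so $e(\underu)^2=(n-1)/n$ and equality is attained. This pins down the equality case and completes the proof.
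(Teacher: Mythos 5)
Your proof is correct and complete: part (a) follows from the term-by-term nonnegativity of $\sum_i u_i(1-u_i)$, and part (b) from Cauchy--Schwarz applied on $\rmsupp(\underu)$, with the equality case correctly identified as proportionality to the constant vector, i.e.\ uniformity. The paper omits the argument entirely (labeling it ``standard''), and yours is exactly the standard argument intended, so there is nothing to compare beyond noting that you have supplied the details the author left out.
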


If $\underu$ is uniform and $n(\underu )\ne 1$ (equivalently $e(\underu )\ne 0$), we say that $\underu$ is \textit{maximally entangled} with index $n(\underu )$. We conclude that there is precisely one maximally entangled probability measure for each nonsingleton finite support in
$\Natural$. Moreover, $\underu$ is maximally entangled if and only if $n(\underu )\ne 1$ and $\doubleab{\underu}^2=1/n(\underu )$. Of course, in this case $\underu$ has the largest entanglement number of any $\underv\in M$ with $n(\underv )=n(\underu )$. We also see that
$0\le e(\underu )<1$ and since $\doubleab{\underu}>0$, there is no $\underu\in M$ with $e(\underu )=1$.

\begin{exam}{1}  % Example 1
(a)\enspace If $u_1=u_2=1/2$, then $e(\underu )=1/\sqrt{2\,}$ and $\underu$ is maximally entangled with index~2.
(b)\enspace If $u_1=u_2=u_3=1/3$, then $e(\underu )=\sqrt{2/3\,}$ and $\underu$ is maximally entangled with index~3 so the entanglement is larger than in (a).
(c)\enspace If $u_1=1/2$, $u_2=1/3$, $u_3=1/6$, then $e(\underu )=\sqrt{11/18\,}$ and
\begin{equation*}
\frac{1}{\sqrt{2}\,}<\sqrt{\tfrac{11}{18}\,}<\sqrt{\tfrac{2}{3}\,}
\end{equation*}
(d)\enspace If $u_1=1/9$, $u_2=1/9$, $u_3=7/9$, then $e(\underu )=\sqrt{30\,}/9<1/\sqrt{2\,}$. This gives the smallest entanglement of the four.
\hfill\qedsymbol
\end{exam}

If $\underu ,\underv\in M$ and $\lambda\in\sqbrac{0,1}$, then $\lambda\underu +(1-\lambda )\underv\in M$ is called a \textit{mixture} of $\underu$ and $\underv$. It is easy to check that 
\begin{equation*}
n\sqbrac{\lambda\underu +(1-\lambda )\underv}=n(\underu )+n(\underv )
\end{equation*}
when $\lambda\in (0,1)$, $\rmsupp (\underu )\cap\rmsupp (\underv )=\emptyset$ and
\begin{equation*}
n\sqbrac{\lambda\underu +(1-\lambda )\underv}\le n(\underu )+n(\underv )
\end{equation*}
in general. However, we have that
\begin{equation*}
n\sqbrac{\lambda\underu +(1-\lambda )\underv}\ge\lambda n(\underu )+(1-\lambda )n(\underv )
\end{equation*}
This last inequality says that the function $n$ is \textit{concave}. We interpret this as saying that mixtures increase the entanglement index. We now show that the entanglement number is concave.

\begin{thm}    % Theorem 1.2
\label{thm12}
For all $\underu ,\underv\in M$, $\lambda\in\sqbrac{0,1}$ we have that
\begin{equation*}
e\sqbrac{\lambda\underu +(1-\lambda )\underv}\ge\lambda e(\underu )+(1-\lambda )e(\underv )
\end{equation*}
Moreover, if $\lambda\in (0,1)$ we have equality if and only if $\underu =\underv$.
\end{thm}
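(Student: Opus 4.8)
The plan is to lift the problem onto the unit sphere of a larger Hilbert space, where both the inequality and its equality case become transparent instances of the triangle inequality. The decisive observation is that for every $\underu\in M$ the pair $\paren{\underu ,e(\underu )}$ is a \emph{unit} vector in the Hilbert space $\ell_2\oplus\real$: indeed $\doubleab{\underu}^2+e(\underu)^2=\doubleab{\underu}^2+\paren{1-\doubleab{\underu}^2}=1$. (This requires $e(\underu)$ to be real, i.e. $\doubleab{\underu}\le 1$, which holds because $u_i\in[0,1]$ forces $\sum u_i^2\le\sum u_i=1$.) Thus $\Phi(\underu)=\paren{\underu,e(\underu)}$ maps $M$ into the unit sphere of $\ell_2\oplus\real$.

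Writing $\underw=\lambda\underu+(1-\lambda)\underv$ and setting $a=\Phi(\underu)$, $b=\Phi(\underv)$, I would note that the convex combination $\lambda a+(1-\lambda)b$ equals $\paren{\underw,\lambda e(\underu)+(1-\lambda)e(\underv)}$, so its squared norm is $\doubleab{\underw}^2+\paren{\lambda e(\underu)+(1-\lambda)e(\underv)}^2$. The triangle inequality gives $\doubleab{\lambda a+(1-\lambda)b}\le\lambda\doubleab{a}+(1-\lambda)\doubleab{b}=1$, and rearranging $\doubleab{\underw}^2+\paren{\lambda e(\underu)+(1-\lambda)e(\underv)}^2\le 1$ yields $\paren{\lambda e(\underu)+(1-\lambda)e(\underv)}^2\le 1-\doubleab{\underw}^2=e(\underw)^2$. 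Taking square roots of these nonnegative quantities gives the claimed concavity.

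For the equality case with $\lambda\in(0,1)$ I would observe that every step above is an equivalence, so equality in the theorem holds exactly when $\doubleab{\lambda a+(1-\lambda)b}=1$; since $a$ and $b$ are unit vectors, strict convexity of the Hilbert norm forces $a=b$, hence $\underu=\underv$, and the converse is immediate. I expect the equality analysis to be the only real obstacle: a brute-force route that squares the target inequality directly and appeals to Cauchy--Schwarz tends to produce an opaque equality condition, whereas the sphere-lifting reduces it cleanly to the absence of line segments in the unit sphere of $\ell_2\oplus\real$. As an alternative for the inequality alone, one could write $e=f\circ h$ with $f(t)=\sqrt t$ concave and nondecreasing and $h(\underu)=1-\doubleab{\underu}^2$ concave, which yields concavity at once but makes the strict equality statement harder to extract.
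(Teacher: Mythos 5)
Your proof is correct, and it takes a genuinely different route from the paper's. The paper argues directly in $\ell_2$: it first derives $1-\doubleab{\underu}\,\doubleab{\underv}\ge e(\underu)e(\underv)$ from the AM--GM inequality $\doubleab{\underu}^2+\doubleab{\underv}^2\ge 2\doubleab{\underu}\,\doubleab{\underv}$, expands to get $1-\sqbrac{\lambda\doubleab{\underu}+(1-\lambda)\doubleab{\underv}}^2\ge\sqbrac{\lambda e(\underu)+(1-\lambda)e(\underv)}^2$, and then applies Schwarz's inequality $\elbows{\underu,\underv}\le\doubleab{\underu}\,\doubleab{\underv}$ to compare the left side with $e\sqbrac{\lambda\underu+(1-\lambda)\underv}^2$; the equality case is read off from equality in Schwarz, giving $\underu=a\underv$ and hence $\underu=\underv$ (since both are probability measures, $a=1$). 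Your lift $\underu\mapsto(\underu,e(\underu))$ into the unit sphere of $\ell_2\oplus\real$ packages the paper's two separate estimates into a single triangle inequality: unwinding $\doubleab{\lambda a+(1-\lambda)b}\le 1$ amounts to $\elbows{\underu,\underv}+e(\underu)e(\underv)\le 1$, which is exactly the conjunction of the paper's Schwarz step and its AM--GM step. What your version buys is a cleaner equality analysis: the paper's chain has two inequalities whose simultaneous saturation must in principle be tracked (the printed proof invokes only the Schwarz one), whereas your argument reduces the whole equality case to strict convexity of the Hilbert unit ball, where $a=b$ and hence $\underu=\underv$ is immediate. What the paper's version buys is that it stays entirely inside $\ell_2$ with elementary algebra and no auxiliary space. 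Both are complete; your one-line verification that $\doubleab{\underu}\le 1$ (so that $e(\underu)$ is real and the lift lands on the sphere) is a detail the paper uses implicitly as well.
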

\begin{proof}
We begin with the inequality
\begin{equation*}
\doubleab{\underu}^2+\doubleab{\underv}^2\ge2\doubleab{\underu}\,\doubleab{\underv}
\end{equation*}
Hence,
\begin{align*}
1+\doubleab{\underu}^2\doubleab{\underv}^2-2\doubleab{\underu}\,\doubleab{\underv}
  &\ge 1-\doubleab{\underu}^2-\doubleab{\underv}^2+\doubleab{\underu}^2\doubleab{\underv}^2\\
  &=\paren{1-\doubleab{\underu}^2}\paren{1-\doubleab{\underv}^2}=e(\underu )^2e(\underv )^2
\end{align*}
Taking the square root gives
\begin{equation*}
1-\doubleab{\underu}\,\doubleab{\underv}\ge e(\underu )e(\underv )
\end{equation*}
It follows that
\begin{align*}
1-2\lambda (1-\lambda )\doubleab{\underu}\,\doubleab{\underv}&\ge 1-2\lambda (1-\lambda )+2\lambda (1-\lambda )e(\underu )e(\underv )\\
  &=\lambda ^2+(1-\lambda )^2+2\lambda (1-\lambda )e(\underu )e(\underv )
\end{align*}
Hence,
\begin{align*}
1-&\sqbrac{\lambda\doubleab{\underu}+(1-\lambda )\doubleab{\underv}}^2\\
  &=1-\lambda ^2\doubleab{\underu}^2-(1-\lambda )^2\doubleab{\underv}^2-2\lambda (1-\lambda )\doubleab{\underu}\,\doubleab{\underv}\\
  &\ge\lambda ^2\paren{1-\doubleab{\underu}^2}+(1-\lambda )^2\paren{1-\doubleab{\underv}^2}+2\lambda (1-\lambda )e(\underu )e(\underv )\\
  &=\sqbrac{\lambda e(\underu )+(1-\lambda )e(\underv )}^2
\end{align*}
Taking the square root and applying Schwarz's inequality gives
\begin{align*}
e\sqbrac{\lambda\underu +(1-\lambda )\underv}&=\sqbrac{1-\doubleab{\lambda\underu +(1-\lambda )\underv}^2}^{1/2}\\
  &=\sqbrac{1-\lambda ^2\doubleab{\underu}^2-(1-\lambda )^2\doubleab{\underv}^2-2\lambda (1-\lambda )\elbows{\underu ,\underv}}^{1/2}\\
  &\ge\sqbrac{1-\lambda ^2\doubleab{\underu}^2-(1-\lambda )^2\doubleab{\underv}^2-2\lambda (1-\lambda )\doubleab{\underu}\,%
  \doubleab{\underv}}^{1/2}\\
  &=\sqbrac{1-\paren{\lambda\doubleab{\underu}+(1-\lambda )\doubleab{\underv}}^2}^{1/2}\ge\lambda e(\underu )+(1-\lambda )e(\underv )
\end{align*}
If we have equality, there is equality in Schwarz's inequality. This implies that $\underu =a\underv$ for some
$a\in\real$. It
follows that $\underu =\underv$.
\end{proof}

\begin{cor}    % Corollary 1.3
\label{cor13}
If $\underu ,\underv\in M$, $\lambda\in (0,1)$ and $\underu\ne\underv$ then
\begin{equation*}
e\sqbrac{\lambda\underu +(1-\lambda )\underv}>\lambda e(\underu )+(1-\lambda )e(\underv )
\end{equation*}
\end{cor}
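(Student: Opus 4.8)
The plan is to read this off directly from Theorem~\ref{thm12}, since the corollary is merely the contrapositive of its equality clause combined with the inequality it already establishes. Recall that Theorem~\ref{thm12} asserts two facts at once: the non-strict bound
\begin{equation*}
e\sqbrac{\lambda\underu +(1-\lambda )\underv}\ge\lambda e(\underu )+(1-\lambda )e(\underv )
\end{equation*}
holds for all $\underu ,\underv\in M$ and all $\lambda\in\sqbrac{0,1}$, and moreover, when $\lambda\in (0,1)$, this inequality is an equality if and only if $\underu =\underv$.

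Given the hypotheses $\underu\ne\underv$ and $\lambda\in (0,1)$, I would argue via the contrapositive of the equality characterization. Were the displayed relation an equality, the ``only if'' direction of Theorem~\ref{thm12} would force $\underu =\underv$, contradicting the assumption $\underu\ne\underv$. Hence equality is impossible. Since the non-strict inequality is guaranteed in any case, it must in fact be strict, which is exactly the assertion of the corollary.

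There is essentially no obstacle to overcome here: all the substantive work---in particular the Schwarz-inequality step of Theorem~\ref{thm12} that pinned the equality case down to $\underu =a\underv$ and then to $\underu =\underv$---has already been carried out. The only point worth checking is that the hypothesis $\lambda\in (0,1)$ places us squarely in the regime where the equality characterization of Theorem~\ref{thm12} applies, which it does, so the corollary follows immediately.
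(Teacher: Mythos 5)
Your argument is correct and is exactly how the paper intends the corollary to be read: the paper gives no separate proof, treating it as an immediate consequence of Theorem~\ref{thm12}, whose non-strict inequality together with the equality characterization (equality for $\lambda\in(0,1)$ forces $\underu=\underv$) yields strictness under the hypothesis $\underu\ne\underv$. Nothing is missing.
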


We define $M\times M$ to be the set of probability measures on $\Natural\times\Natural$. Thus $\underu\in M\times M$ if
$\underu =\brac{u_{ij}\colon i,j\in\Natural}$, $u_{ij}\ge 0$, $\sum u_{ij}=1$. As before, the \textit{entanglement number} of $\underu$ is defined by 
\begin{equation*}
e(\underu )=\paren{1-\doubleab{\underu}^2}^{1/2}=\paren{1-\sum u_{ij}^2}^{1/2}
\end{equation*}
We also have that $e(\underu )=0$ if and only if $\underu$ is a point measure with $u_{ij}=1$ for some $i,j\in\Natural$. If
$\underv ,\underw\in M$ we define $\underu =\underv\times\underw\in M\times M$ by $u _{ij}=v_iw_j$. We say that $\underu\in M\times M$ is
\textit{factorized} if $\underu =\underv\times\underw$ for some $\underu ,\underv\in M$. If $\underu$ is not factorized, we say that
$\underu$ is \textit{entangled}. It is easy to check that $\underu$ is factorized if and only if for all $i,j\in\Natural$ we have \cite{gud19}
\begin{equation}                % equation (1.2)
\label{eq12}
u_{ij}=\sum _ju_{ij}\sum _iu_{ij}
\end{equation}
Note that if $e(\underu )=0$, then $\underu$ is factorized. However, the converse does not hold because there are factorized $\underu$ that are not point measures. For a quantum state $\psi$, we shall show that $e(\psi )=0$ if and only if $\psi$ is factorized and this will be an important difference between the quantum theory and this classical theory. It should be pointed out that $e(\psi )$ and factorization of $\psi$ are different in the quantum case, however, the analogy is similar.

\begin{exam}{2}  % Example 2
(a)\enspace Let $\underu\in M\times M$ be defined by $u_{11}=1/2$, $u_{12}=1/2$. Then $e(\underu )=1/\sqrt{2\,}$ and
$\underu =\underv\times\underw$ where $v_1=1$, $w_1=1/2$, $w_2=1/2$. Thus, $\underu$ is factorized.
(b)\enspace Let $\underu\in M\times M$ be defined by $u_{11}=1/3$, $u_{12}=1/3$, $u_{22}=1/3$. Then
$\sum u_{1j}=2/3$, $\sum u_{i1}=1/3$ and $\tfrac{1}{3}\ne\tfrac{2}{9}$ so \eqref{eq12} does not hold. Hence,
$\underu$ is entangled and we have $e(\underu )=\sqrt{2/3\,}$.\hfill\qedsymbol
\end{exam}

\section{Context Coefficients}  % Section 2
This section discusses the quantum statistics of operators. The basic framework for traditional quantum mechanics is a complex Hilbert space $H$. For simplicity, we shall assume that $\dim H<\infty$. Although this is a restriction, it is adequate for descriptions of quantum computation and information theory \cite{bus03,hz12,nc00}. A \textit{pure state} is represented by a one-dimensional projection $P$ on $H$. Since $P$ is one-dimensional, we can describe $P$ by a unit vector $\phi$ in its range and write $P=P_\phi=\ket{\phi}\bra{\phi}$. We also call $\phi$ a \textit{vector state} (or \textit{state vector}). A \textit{context} for a quantum system is a set of mutually orthogonal projections $P_{\phi _i}$ on $H$ such that $\sum P_{\phi _i}=I$. Equivalently, a context can be described by the corresponding orthonormal basis $\brac{\phi _i}$ of vector states. A context can be thought of as a complete set of minimal sharp events. We then see that there are an infinite uncountable number of contexts for a quantum system. This is in contrast to the classical systems described by $\Natural$ in Section~1. In that case, the minimal sharp events are just the points in $\Natural$ so the only context is $\Natural$ itself.

Let $\lscript (H)$ be the set of linear operators on $H$. The elements of $\lscript (H)$ are used to describe states, observables, symmetries and dynamics of the quantum system. If $A\in\lscript (H)$, we define the positive operator $\ab{A}$ by $\ab{A}=(A^*A)^{1/2}$. A \textit{state} is an operator $\rho\in\lscript (H)$ such that $\rho\ge 0$ and $\rmtr (\rho )=1$. Of course, a pure state is a specific type of state. We denote the set of states on $H$ by $\sscript (H)$. Any state has a spectral resolution $\rho =\sum\lambda _iP_i$ where $P_i$ are mutually orthogonal pure states, $\lambda _i\ge 0$, $\sum\lambda _i=1$. If $\rho\in\sscript (H)$ and $A\in\lscript (H)$, then the $\rho$-\textit{expectation} of $A$ is $E_\rho (A)=\rmtr (\rho A)$ and the $\rho$-\textit{variance} of $A$ is
\begin{equation*}
V_\rho (A)=E_\rho\sqbrac{\ab{A-E_\rho (A)I}^2}
\end{equation*}
In particular, for a pure state $P_\phi$ we have that
\begin{align*}
E_\phi (A)&=E_{P_\phi}(A)=\elbows{\phi ,A\phi}\\
  \intertext{and}
V_\phi (A)&=V_{P_\phi}(A)=\elbows{\phi ,\ab{A-\elbows{\phi ,A\phi}I}^2\phi}
\end{align*}
The complex vector space $\lscript (H)$ becomes a Hilbert space under the \textit{Hilbert-Schmidt inner product} $\elbows{A,B}=\rmtr (A^*B)$ \cite{hz12,nc00}. The Hilbert-Schmidt norm becomes
\begin{equation*}
\doubleab{A}=\sqbrac{\rmtr (A^*A)}^{1/2}=\sqbrac{\rmtr (\ab{A}^2)}^{1/2}
\end{equation*}

\begin{thm}    % Theorem 2.1
\label{thm21}
{\rm{(a)}}\enspace $V_\rho (A)=E_\rho (\ab{A}^2)-\ab{E_\rho (A)}^2$.
{\rm{(b)}}\enspace $\ab{E_\rho (A)}^2\le E_\rho (\ab{A}^2)$ and $V_\rho (A)=0$ if and only if $A\rho ^{1/2}=c\rho ^{1/2}$ for some $c\in\complex$.
\end{thm}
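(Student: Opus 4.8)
The plan is to exploit the two standard identities for trace and the cyclic property of the trace, working directly with the positive operator $\ab{A}^2 = A^*A$ and the density operator $\rho$. For part (a), I would start by expanding the variance definition. Since $\ab{A - E_\rho(A)I}^2 = (A - E_\rho(A)I)^*(A - E_\rho(A)I)$, I would multiply this out, writing $c = E_\rho(A)$ for brevity, to obtain $(A^* - \bar c I)(A - cI) = A^*A - \bar c A - c A^* + \ab{c}^2 I$. Then I would apply $E_\rho = \rmtr(\rho\,\cdot\,)$ term by term, using linearity of the trace. The cross terms yield $-\bar c\, E_\rho(A) - c\, E_\rho(A^*)$; recognizing that $E_\rho(A) = c$ and $E_\rho(A^*) = \overline{E_\rho(A)} = \bar c$ (the latter because $\rmtr(\rho A^*) = \overline{\rmtr(\rho A)}$ when $\rho \ge 0$), these combine with the $\ab{c}^2$ term to give exactly $-\ab{c}^2 = -\ab{E_\rho(A)}^2$. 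This establishes $V_\rho(A) = E_\rho(\ab{A}^2) - \ab{E_\rho(A)}^2$.

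For the inequality in part (b), I would observe that $V_\rho(A) \ge 0$ because it is the $\rho$-expectation of the positive operator $\ab{A - E_\rho(A)I}^2$, and $E_\rho$ of any positive operator is nonnegative (since $\rho \ge 0$ forces $\rmtr(\rho B) \ge 0$ for $B \ge 0$). Combined with part (a), this immediately gives $\ab{E_\rho(A)}^2 \le E_\rho(\ab{A}^2)$.

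The characterization of the equality case $V_\rho(A) = 0$ is where I expect the main work. The key reformulation is to write everything in terms of the Hilbert--Schmidt structure. Using the cyclic property of the trace, I would rewrite $V_\rho(A) = \rmtr\!\bigl(\rho\, (A - cI)^*(A - cI)\bigr) = \rmtr\!\bigl((A-cI)\rho^{1/2}\,\rho^{1/2}(A-cI)^*\bigr)$, which equals $\doubleab{(A - cI)\rho^{1/2}}^2$ under the Hilbert--Schmidt norm (after a suitable adjoint rearrangement, since $\rmtr(B^*B) = \doubleab{B}^2$). Hence $V_\rho(A) = 0$ holds if and only if $(A - cI)\rho^{1/2} = 0$, that is $A\rho^{1/2} = c\,\rho^{1/2}$, with $c = E_\rho(A) \in \complex$. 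The one subtlety to handle carefully is the direction asserting that \emph{some} $c \in \complex$ suffices: if $A\rho^{1/2} = c\rho^{1/2}$ for any scalar $c$, then multiplying on the right by $\rho^{1/2}$ and taking the trace gives $\rmtr(\rho A) = c\,\rmtr(\rho)= c$, so that scalar is forced to be $E_\rho(A)$, and the computation above then returns $V_\rho(A) = 0$. The main obstacle is simply getting the adjoints and the placement of $\rho^{1/2}$ right so that the variance genuinely appears as a squared Hilbert--Schmidt norm of a single operator, since that is what converts the scalar equality $V_\rho(A) = 0$ into the operator equation $(A - cI)\rho^{1/2} = 0$.
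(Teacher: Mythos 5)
Your proof is correct. Part (a) and the inequality in (b) follow the paper's argument essentially verbatim (expand $(A-cI)^*(A-cI)$, apply $E_\rho$ termwise, and use $V_\rho(A)\ge 0$). The one place you genuinely diverge is the characterization of $V_\rho(A)=0$. The paper rewrites the condition $\ab{E_\rho(A)}^2=E_\rho(\ab{A}^2)$ as $\ab{\elbows{\rho^{1/2},A\rho^{1/2}}}=\doubleab{\rho^{1/2}}\,\doubleab{A\rho^{1/2}}$ and invokes the equality case of Schwarz's inequality in the Hilbert--Schmidt space to conclude $A\rho^{1/2}=c\rho^{1/2}$. You instead complete the square at the operator level, writing $V_\rho(A)=\doubleab{(A-E_\rho(A)I)\rho^{1/2}}^2$ via cyclicity of the trace, so that vanishing of the variance is just definiteness of the norm. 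Your route is a bit more elementary and self-contained: it avoids appealing to the characterization of equality in Schwarz (which strictly gives only linear dependence of $\rho^{1/2}$ and $A\rho^{1/2}$ and needs the remark $\rho^{1/2}\ne 0$ to produce the stated form), and it identifies the scalar $c$ explicitly as $E_\rho(A)$, which also makes the converse direction immediate. The paper's version, in exchange, reuses the Schwarz-equality template that recurs elsewhere in the text. Both are complete and correct.
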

\begin{proof}
The following computation proves (a).
\begin{align*}
V_\rho (A)&=\rmtr\sqbrac{\rho \paren{A-E_\rho (A)I}^*\paren{A-E_\rho (A)I}}\\
  &=\rmtr\sqbrac{\rho\paren{A^*-\overline{E_\rho (A)}I}\paren{A-E_\rho (A)I}}\\
  &=\rmtr\sqbrac{\rho\paren{\ab{A}^2-\overline{E_\rho (A)}A-E_\rho (A)A^*+\ab{E_\rho (A)}^2I}}\\
  &=E_\rho\paren{\ab{A}^2}-2\ab{E_\rho (A)}^2+\ab{E_\rho (A)}^2=E_\rho\paren{\ab{A}^2}-\ab{E_\rho (A)}^2
\end{align*}
(b)\enspace Since $V_\rho (A)\ge 0$ we have that $\ab{E_\rho (A)}^2\le E_\rho\paren{\ab{(A)}^2}$. By (a) we have
$V_\rho (A)=0$ if and only if $\ab{E_\rho (A)}^2=E_\rho\paren{\ab{A}^2}$. In terms of the Hilbert-Schmidt inner product, we have that $V_\rho (A)=0$ if and only if
\begin{align*}
\doubleab{A\rho ^{1/2}}^2&=\rmtr\sqbrac{(A\rho ^{1/2})^*A\rho ^{1/2}}=\rmtr(\rho ^{1/2}A^*A\rho ^{1/2})
  =\rmtr(\rho A^*A)\\
  &=E_\rho\paren{\ab{A}^*}=\ab{E_\rho (A)}^2=\ab{\rmtr (\rho A)}^2=\ab{\rmtr(\rho ^{1/2}A\rho ^{1/2})}^2\\
  &=\ab{\elbows{\rho ^{1/2},A\rho ^{1/2}}}^2
\end{align*}
Since $\rmtr (\rho )=1$ we have that $\doubleab{\rho ^{1/2}}=1$. Hence,
\begin{equation*}
\ab{\elbows{\rho ^{1/2},A\rho ^{1/2}}}=\doubleab{\rho ^{1/2}}\,\doubleab{A\rho ^{1/2}}
\end{equation*}
Since we have equality in Schwarz's inequality, we conclude that $A\rho ^{1/2}=c\rho ^{1/2}$ for some
$c\in\complex$.
\end{proof}

\begin{cor}    % Corollary 2.2
\label{cor22}
If $\phi$ is a vector state, then
\begin{equation*}
V_\phi (A)=\elbows{\phi ,\ab{A}^2\phi}-\ab{\elbows{\phi ,A\phi}}^2
\end{equation*}
and $V_\phi (A)=0$ if and only if $A\phi =c\phi$ for some $c\in\complex$; that is, $\phi$ is an eigenvector of $A$ with eigenvalue $c$.
\end{cor}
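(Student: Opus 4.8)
The plan is to obtain both assertions as direct specializations of Theorem~\ref{thm21} to the case $\rho=P_\phi$. For the variance formula I would simply set $\rho=P_\phi$ in Theorem~\ref{thm21}(a) and invoke the pure-state expectation identities recorded just before that theorem, namely $E_\phi(A)=\elbows{\phi,A\phi}$ and, with $\ab{A}^2$ in the role of the observable, $E_\phi(\ab{A}^2)=\elbows{\phi,\ab{A}^2\phi}$. Substituting these into $V_\rho(A)=E_\rho(\ab{A}^2)-\ab{E_\rho(A)}^2$ yields $V_\phi(A)=\elbows{\phi,\ab{A}^2\phi}-\ab{\elbows{\phi,A\phi}}^2$ with no further computation.

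For the characterization of $V_\phi(A)=0$, the key observation is that $P_\phi$ is a projection, hence idempotent, so it equals its own positive square root: $P_\phi^{1/2}=P_\phi$. Feeding $\rho=P_\phi$ into Theorem~\ref{thm21}(b) then reduces the condition $V_\phi(A)=0$ to the operator equation $AP_\phi=cP_\phi$ for some $c\in\complex$, so it only remains to show that this is equivalent to $A\phi=c\phi$.

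The one step that requires a small argument is translating the rank-one operator identity into the vector statement. Writing $P_\phi=\ket{\phi}\bra{\phi}$, the forward direction follows by applying both sides of $AP_\phi=cP_\phi$ to the unit vector $\phi$ and using $P_\phi\phi=\phi$, which gives $A\phi=c\phi$. For the converse, if $A\phi=c\phi$ then $AP_\phi=\ket{A\phi}\bra{\phi}=\ket{c\phi}\bra{\phi}=cP_\phi$. I expect this rank-one bookkeeping to be the only genuine obstacle, and it is a minor one; every other part of the corollary is mere substitution into the already-proved theorem.
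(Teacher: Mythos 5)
Your proposal is correct and follows exactly the route the paper intends: the corollary is stated without proof as an immediate specialization of Theorem~\ref{thm21} to $\rho=P_\phi$, using $P_\phi^{1/2}=P_\phi$ and the translation of $AP_\phi=cP_\phi$ into $A\phi=c\phi$. Your rank-one bookkeeping for that last equivalence is the only nontrivial detail, and you handle it correctly in both directions.
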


A context given by an orthonormal basis $\ascript =\brac{\phi _i}$ can be thought of as giving a partial view of a quantum system. In order to obtain a total view we must consider various contexts \cite{gud181,gud182}. We say that $A\in\lscript (H)$ is \textit{measurable} with respect to $\ascript$ if $AP_{\phi _i}=P_{\phi _i}A$ for every $I$. In this case, $\phi _i$ is an eigenvector of $A$ with eigenvalue $\elbows{\phi _i,A\phi _i}=E_{\phi _i}(A)$. The only operators accurately described by $\ascript$ are the operators that are measurable with respect to $\ascript$ \cite{gud181,gud182}. We define the \textit{context coefficient} of $A$ with respect to $\ascript$ by
\begin{equation}                % equation (2.1)
\label{eq21}
c_\ascript (A)=\sqbrac{\sum V_{\phi _i}(A)}^{1/2}
\end{equation}

It follows from Corollary~\ref{cor22} that $c_\ascript (A)=0$ if and only if $A$ is measurable with respect to
$\ascript$. We can consider $c_\ascript (A)$ as an indicator of how close $A$ is to being measurable with respect to $\ascript$. We also see that $A$ is normal $(AA^*=A^*A)$ if and only if $c_\ascript (A)=0$ for some context
$\ascript$. For any $A\in\lscript (H)$ and context $\ascript =\brac{\phi _i}$ we can write
\begin{equation*}
A=\sum _i\elbows{\phi _i,A\phi _i}\ket{\phi _i}\bra{\phi _i}
  +\sum _{i\ne j}\elbows{\phi _i,A\phi _j}\ket{\phi _i}\bra{\phi _j}
\end{equation*}
We define the linear maps $L_\ascript ,R_\ascript\colon\lscript (H)\to\lscript (H)$ by
\begin{align*}
L_\ascript (A)&=\sum _i\elbows{\phi _i,A\phi _i}\ket{\phi _i}\bra{\phi _i}\\
R_\ascript (A)&=\sum _{i\ne j}\elbows{\phi _i,A\phi _j}\ket{\phi _i}\bra{\phi _j}
\end{align*}
and call $L_\ascript$ the \textit{context map} and $R_\ascript$ the \textit{residual map}. Thus,
$A=L_\ascript (A)+R_\ascript (A)$. Notice that $L_\ascript$ maps self-adjoint operators to self-adjoint operators, positive operators to positive operators and states to states. In fact, $L_\ascript$ is a completely positive map
\cite{gud182,hz12,nc00} and is an example of a quantum channel \cite{hz12,nc00}. Also $L_\ascript (A)$ is measurable with respect to $\ascript$ and $A$ is measurable with respect to $\ascript$ if and only if
$L_\ascript (A)=A$ or equivalently $R_\ascript (A)=0$. We remind the reader that $\doubleab{B}$ stands for the Hilbert-Schmidt norm of $B\in\lscript (H)$.

\begin{thm}    % Theorem 2.3
\label{thm23}
For every $A\in\lscript (H)$ and context $\ascript =\brac{\phi _i}$ we have that
$\doubleab{R_\ascript (A)}=c_\ascript (A)$.
\end{thm}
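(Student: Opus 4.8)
The plan is to evaluate both sides explicitly in terms of the off-diagonal matrix entries $\elbows{\phi_i, A\phi_j}$, $i\ne j$, and to check that the two resulting sums coincide.

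First I would record that the rank-one operators $\ket{\phi_i}\bra{\phi_j}$ form an orthonormal family for the Hilbert-Schmidt inner product. Indeed, $\paren{\ket{\phi_i}\bra{\phi_j}}^* = \ket{\phi_j}\bra{\phi_i}$, so
\[
\elbows{\ket{\phi_i}\bra{\phi_j}, \ket{\phi_k}\bra{\phi_l}} = \rmtr\paren{\ket{\phi_j}\bra{\phi_i}\,\ket{\phi_k}\bra{\phi_l}} = \elbows{\phi_i, \phi_k}\elbows{\phi_l, \phi_j},
\]
which vanishes unless $i=k$ and $j=l$, in which case it equals $1$. Since $R_\ascript(A) = \sum_{i\ne j}\elbows{\phi_i, A\phi_j}\ket{\phi_i}\bra{\phi_j}$ is a linear combination of distinct members of this orthonormal family, Parseval's identity in the Hilbert space $\lscript(H)$ gives at once
\[
\doubleab{R_\ascript(A)}^2 = \sum_{i\ne j}\ab{\elbows{\phi_i, A\phi_j}}^2.
\]

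For the other side, Corollary~\ref{cor22} yields $V_{\phi_i}(A) = \elbows{\phi_i, \ab{A}^2\phi_i} - \ab{\elbows{\phi_i, A\phi_i}}^2$. The first term equals $\doubleab{A\phi_i}^2$, and expanding $A\phi_i$ in the orthonormal basis $\brac{\phi_j}$ via Parseval's identity in $H$ gives $\doubleab{A\phi_i}^2 = \sum_j\ab{\elbows{\phi_j, A\phi_i}}^2$. Summing over $i$ and cancelling the diagonal terms $j=i$ against $\sum_i\ab{\elbows{\phi_i, A\phi_i}}^2$ leaves
\[
c_\ascript(A)^2 = \sum_i V_{\phi_i}(A) = \sum_{i\ne j}\ab{\elbows{\phi_j, A\phi_i}}^2.
\]

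The two double sums are identical after the relabeling $i\leftrightarrow j$, whence $\doubleab{R_\ascript(A)}^2 = c_\ascript(A)^2$, and taking square roots gives the result. I do not anticipate a genuine obstacle: the argument is essentially two applications of Parseval's identity, one in $\lscript(H)$ and one in $H$, together with the orthonormality of the matrix units. The only point needing care is the order of the arguments in the inner products---$R_\ascript$ is defined using $\elbows{\phi_i, A\phi_j}$ while the variance expansion naturally produces $\elbows{\phi_j, A\phi_i}$---but since the summation runs over all ordered pairs with $i\ne j$, this discrepancy is immaterial.
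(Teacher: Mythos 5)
Your proof is correct and follows essentially the same route as the paper: both arguments reduce each side to the common quantity $\sum_{i\ne j}\ab{\elbows{\phi_i,A\phi_j}}^2$, with your Parseval/orthonormality step for the matrix units being just a tidier packaging of the paper's explicit computation of $\rmtr\sqbrac{R_\ascript(A)^*R_\ascript(A)}$.
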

\begin{proof}
Since
\begin{align*}
R_\ascript (A)^*R_\ascript (A)&=\sum _{i\ne j}\elbows{A\phi _j,\phi _i}\ket{\phi _j}\bra{\phi _i}
  \ctimes\sum _{r\ne s}\elbows{\phi _r,A\phi _s}\ket{\phi _r}\bra{\phi _s}\\
  &=\sum _{\scriptstyle i,j,s\atop\scriptstyle i\ne s,j}\elbows{A\phi _j,\phi _i}
  \elbows{\phi _i,A\phi _s}\ket{\phi _j}\bra{\phi _s}
\end{align*}
we conclude that
\begin{align*}
\doubleab{R_\ascript (A)}^2&=\rmtr\sqbrac{R_\ascript (A)^*R_\ascript (A)}
  =\sum _{i\ne k}\elbows{A\phi _k,\phi _i}\elbows{\phi _i,A\phi _k}\\
  &=\sum _{i,k}\elbows{A\phi _k,\phi _i}\elbows{\phi _i,A\phi _k}
  -\sum _i\elbows{A\phi _i,\phi _i}\elbows{\phi _i,A\phi _i}\\
  &=\sum _k\elbows{A\phi _k,A\phi _k}-\sum _i\ab{\elbows{\phi _i,A\phi _i}}^2\\
  &=\sum _k\paren{\elbows{\phi _k,\ab{A}^2\phi _k}-\ab{\elbows{\phi _k,A\phi _k}}^2}\\
  &=\sum _kV_{\phi _k}(A)=c_\ascript (A)^2\qedhere
\end{align*}
\end{proof}

It follows from Theorem~\ref{thm23} that $c_\ascript (A)=\doubleab{A-L_\ascript (A)}$ so that $c_\ascript (A)$ is a measure of the closeness of $A$ to $L_\ascript (A)$. Of course, $c_\ascript (A)=0$ if and only if
$A=L_\ascript (A)$, $c_\ascript (\alpha A)=\ab{\alpha}c_\ascript (A)$ and
\begin{equation*}
c_\ascript (A+B)\le c_\ascript (A)+c_\ascript (B)
\end{equation*}
Observe that $L_\ascript (A)$ is always normal with eigenvalues $\elbows{\phi _i,A\phi _i}$ and corresponding eigenvectors $\phi _i$. In general $R_\ascript (A)$ need not be normal and even when it is, its eigenstructure can be difficult to analyze except in two simple but important cases. One is when $\dim H=2$ and the other when 
$\elbows{\phi _i,A\phi _j}=\alpha$ for all $i\ne j$.

\begin{exam}{3}  % Example 3
Suppose $\dim H=2$ and $R_\ascript (A)$ is normal. We can write
\begin{align*}
R_\ascript (A)&=\elbows{\phi _1,A\phi _2}\ket{\phi _1}\bra{\phi _2}
  +\elbows{\phi _2,A\phi _1}\ket{\phi _2}\bra{\phi _1}\\
  &=a\ket{\phi _1}\bra{\phi _2}+b\ket{\phi _2}\bra{\phi _1}
\end{align*}
We assume that $a,b\ne 0$ because otherwise the situation is trivial. It is easy to check that $R_\ascript (A)$ is normal if and only if $\ab{a}=\ab{b}$ in which case $a=\ab{a}e^{i\theta}$, $b=\ab{a}e^{i\phi}$, $\theta ,\phi\in\real$. Then the eigenvalues of $R_\ascript (A)$ are
\begin{equation*}
\lambda _1=\ab{a}e^{i(\theta +\phi )/2},\quad\lambda _2=-\ab{a}e^{i(\theta +\phi )/2}
\end{equation*}
with corresponding eigenvectors
\begin{equation*}
\psi _1=\tfrac{1}{\sqrt{2}}\,\sqbrac{\phi _1+e^{i(\phi -\theta )}\phi _2},\quad
  \psi _2=\tfrac{1}{\sqrt{2}}\,\sqbrac{-e^{i(\theta -\phi )/2}\phi _1+\phi _2}\rlap{$\qquad\qquad \Box$}
\end{equation*}
\end{exam}

\begin{exam}{4}  % Example 4
Suppose $\dim H=n$ and $\elbows{\phi _i,A\phi _j}=\alpha\ne 0$, $i,j=1,\ldots ,n$. We then have that
\begin{equation*}
R_\ascript (A)=\alpha\sum _{i\ne j}\ket{\phi _i}\bra{\phi _j}
\end{equation*}
It follows that $R_\ascript (A)^*=\tfrac{\overalpha}{\alpha}\,R_\ascript (A)$ so $R_\alpha (A)$ is normal. For the rest of this example, we assume that $\alpha =1$ and $\alpha$ can be multiplied later if needed. First note that
$R_\ascript (A)\phi _k=\sum _{i\ne k}\phi _i$. Letting $\psi =\tfrac{1}{\sqrt{n}}\,\sum _{k=1}^n\phi _k$, it follows that
\begin{align*}
R_\ascript (A)\psi&=\tfrac{1}{\sqrt{n}}\sum _{i\ne j}\ket{\phi _i}\bra{\phi _k}\sum _{k=1}^n\phi _k
  =\tfrac{1}{\sqrt{n}}\sum _{k=1}^n\sum _{i\ne k}\phi _i\\
  &=\tfrac{n-1}{\sqrt{n}}\,\sum _{k=1}^n\psi _k=(n-1)\psi
\end{align*}
Hence, $\psi$ is a normalized eigenvector of $R_\ascript (A)$ with eigenvalue $n-1$. We will show that the other $n-1$ eigenvectors of $R_\ascript (A)$ all have eigenvalue $\minusone$ so $\minusone$ has multiplicity $n-1$. The simplest way to show this is to examine the first few cases and to observe the resulting pattern. When $n=2$, we have that $\tfrac{1}{\sqrt{2}}\,(\phi _1-\phi _2)$ is an eigenvector with eigenvalue $\minusone$. When $n=4$,
$\tfrac{1}{\sqrt{2}}\,(\phi _1-\phi _2)$, $\tfrac{1}{\sqrt{2}}\,(\phi _3-\phi _4)$,
$\tfrac{1}{2}\,(\phi _1,+\phi _2-\phi _3-\phi _4)$ are eigenvectors with eigenvalue $\minusone$. When $n=6$,
$\tfrac{1}{\sqrt{2}}\,(\phi _1-\phi _2)$, $\tfrac{1}{\sqrt{2}}\,(\phi _3-\phi _4)$, $\tfrac{1}{\sqrt{2}}\,(\phi _5-\phi _6)$,
$\tfrac{1}{2}\,(\phi _3+\phi _4-\phi _5-\phi _6)$,
$\tfrac{1}{\sqrt{8}}\,(2\phi _1+2\phi _2-\phi _3-\phi _4-\phi _5-\phi _6)$ are eigenvectors with eigenvalue
$\minusone$. When $n=3$, $\tfrac{1}{\sqrt{2}}\,(\phi _1-\phi _2)$, $\tfrac{1}{2}\,(\phi _1+\phi _2-2\phi _3)$ are eigenvectors with eigenvalue $\minusone$. When $n=5$, $\tfrac{1}{\sqrt{2}}\,(\phi _1-\phi _2)$,
$\tfrac{1}{\sqrt{2}}\,(\phi _3-\phi _4)$, $\tfrac{1}{2}\,(\phi _1+\phi _2-\phi _3-\phi _4)$,
$\tfrac{1}{\sqrt{8}}\,(\phi _1+\phi _2+\phi _3+\phi _4-4\phi _5)$ are eigenvectors with eigenvalue $\minusone$. In summary, we have the following result.\hfill\qedsymbol
\end{exam}

\begin{thm}    % Theorem 2.4
\label{thm24}
Let $R_\ascript (A)=\sum _{i\ne j}\ket{\phi _i}\bra{\phi _j}$ and let $\dim H=n$. Then $R_\ascript (A)$ has eigenvalue $n-1$ with corresponding eigenvector $\psi=\tfrac{1}{\sqrt{n}}\,\sum _{k=1}^n\phi _k$ and
$R_\ascript (A)$ has eigenvalue $\minusone$ with multiplicity $n-1$ and the corresponding eigenvectors form an orthonormal basis for $\brac{\psi}^\bot$.
\end{thm}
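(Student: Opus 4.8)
The plan is to recognize $R_\ascript (A)$ as a simple affine function of a single rank-one projection, after which the entire spectrum is forced. First I would rewrite the operator by restoring the diagonal terms:
\[
R_\ascript (A)=\sum _{i\ne j}\ket{\phi _i}\bra{\phi _j}=\sum _{i,j}\ket{\phi _i}\bra{\phi _j}-\sum _i\ket{\phi _i}\bra{\phi _i}=J-I,
\]
where $J=\sum _{i,j}\ket{\phi _i}\bra{\phi _j}$ is the operator every one of whose matrix entries in the basis $\brac{\phi _i}$ equals $1$. The point is that this "all-ones" operator is a scaled rank-one projection onto the span of the uniform vector $\psi =\tfrac{1}{\sqrt{n}}\,\sum _{k=1}^n\phi _k$.

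Next I would verify that $\ket{\psi}\bra{\psi}=\tfrac{1}{n}\sum _{i,j}\ket{\phi _i}\bra{\phi _j}=\tfrac{1}{n}J$, so that $J=n\ket{\psi}\bra{\psi}$ and hence
\[
R_\ascript (A)=n\ket{\psi}\bra{\psi}-I.
\]
Since $\ket{\psi}\bra{\psi}$ is the rank-one projection onto $\complex\psi$, its spectrum consists of the eigenvalue $1$ on $\complex\psi$ together with the eigenvalue $0$ on $\brac{\psi}^\bot$, the latter with multiplicity $n-1$. Applying the affine map $x\mapsto nx-1$ to these two eigenvalues then yields the full spectrum of $R_\ascript (A)$: the vector $\psi$ is an eigenvector with eigenvalue $n\cdot 1-1=n-1$, recovering the computation of Example~4, while every nonzero vector of $\brac{\psi}^\bot$ is an eigenvector with eigenvalue $n\cdot 0-1=\minusone$.

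To finish, I would observe that $\ket{\psi}\bra{\psi}$ annihilates $\brac{\psi}^\bot$, so $R_\ascript (A)$ restricts to $-I$ on that $(n-1)$-dimensional subspace; consequently \emph{any} orthonormal basis of $\brac{\psi}^\bot$ consists of eigenvectors with eigenvalue $\minusone$, and $\minusone$ has multiplicity $n-1$. This gives exactly the claimed eigenstructure. I do not expect a serious obstacle here: once the all-ones structure is identified, everything is determined by the spectral theorem for rank-one projections. The only step requiring a moment's care is the passage from "all entries equal $1$" to the normalized projection $n\ket{\psi}\bra{\psi}$. A pleasant byproduct is that the explicit case-by-case orthonormal bases exhibited in Example~4 are in fact unnecessary for the proof—any basis of $\brac{\psi}^\bot$ serves—although they remain useful for concrete computation.
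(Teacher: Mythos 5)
Your proof is correct, and it takes a genuinely different (and more complete) route than the paper. The paper's argument is the computation in Example~4: it verifies directly that $R_\ascript (A)\psi =(n-1)\psi$, then exhibits explicit orthonormal eigenvectors with eigenvalue $\minusone$ for $n=2,3,4,5,6$ and asks the reader ``to observe the resulting pattern''; Theorem~\ref{thm24} is then stated as a summary. That is an inductive inspection of low-dimensional cases rather than a proof valid for all $n$. Your identity
\begin{equation*}
R_\ascript (A)=\sum _{i,j}\ket{\phi _i}\bra{\phi _j}-\sum _i\ket{\phi _i}\bra{\phi _i}
  =n\ket{\psi}\bra{\psi}-I
\end{equation*}
settles the whole spectrum in one stroke: the affine image of the projection's spectrum $\brac{1,0}$ gives eigenvalue $n-1$ on $\complex\psi$ and eigenvalue $\minusone$ on $\brac{\psi}^\bot$ with multiplicity $n-1$, and since the restriction to $\brac{\psi}^\bot$ is $-I$, any orthonormal basis of that subspace serves as the claimed eigenbasis. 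What your approach buys is rigor and uniformity in $n$, plus the observation (which you note) that the paper's hand-built bases are unnecessary; what the paper's approach buys is only the concrete eigenvectors themselves, useful for explicit computation. Your argument is the one that should be regarded as the proof of the theorem.
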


\section{Entanglement}  % Section 3
We now incorporate the two previous sections to develop a general theory of quantum entanglement. We restrict attention to bipartite systems and leave multipartite systems for later work. Let $H_1,H_2$ be finite dimensional complex Hilbert spaces and let $H=H_1\otimes H_2$. A state $\rho\in\sscript (H)$ is \textit{factorized} if there exist states $\rho _1\in\sscript (H_1)$, $\rho _2\in\sscript (H_2)$ such that $\rho =\rho _1\otimes\rho _2$. A state
$\rho\in\sscript (H)$ is \textit{separable} if $\rho$ can be written as a convex combination
$\rho =\sum\lambda _i\rho _i\otimes\sigma _i$ of factorized states. If $\rho$ is not separable, it is \textit{entangled}. Also, we say that a vector state $\psi\in H$ is \textit{factorized} if there exist vector states
$\phi _1\in H_1$, $\phi _2\in H_2$ such that $\psi =\phi _1\otimes\phi _2$. If $\psi$ is not factorized, then $\psi$ is \textit{entangled}. The following lemma summarizes some known properties of factorized states \cite{hz12}. We include the proofs for completeness.

\begin{lem}    % Lemma 3.1
\label{lem31}
{\rm{(a)}}\enspace A pure state $\ket{\eta}\bra{\eta}\in\sscript (H)$ is factorized if and only if the vector state
$\eta$ is factorized
{\rm{(b)}}\enspace A pure state $\ket{\eta}\bra{\eta}\in\sscript (H)$ is separable if and only if 
$\ket{\eta}\bra{\eta}$ is factorized.
\end{lem}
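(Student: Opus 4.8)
The plan is to prove each equivalence by splitting into its two implications and exploiting two structural facts: that a tensor product of states is a pure state exactly when both factors are pure, and that the pure states are precisely the extreme points of the convex set $\sscript(H)$.

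For part (a), the reverse direction is a direct computation: if $\eta=\phi_1\otimes\phi_2$ then $\ket{\eta}\bra{\eta}=\ket{\phi_1}\bra{\phi_1}\otimes\ket{\phi_2}\bra{\phi_2}$, via the identity $\ket{a\otimes b}\bra{c\otimes d}=\ket{a}\bra{c}\otimes\ket{b}\bra{d}$, and this exhibits $\ket{\eta}\bra{\eta}$ as a product of two pure states. For the forward direction I would suppose $\ket{\eta}\bra{\eta}=\rho_1\otimes\rho_2$ and pass to spectral resolutions $\rho_1=\sum_k\mu_k\ket{e_k}\bra{e_k}$ and $\rho_2=\sum_\ell\nu_\ell\ket{f_\ell}\bra{f_\ell}$ with orthonormal eigenbases. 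Then
\begin{equation*}
\rho_1\otimes\rho_2=\sum_{k,\ell}\mu_k\nu_\ell\ket{e_k\otimes f_\ell}\bra{e_k\otimes f_\ell}
\end{equation*}
is itself a spectral resolution, since the vectors $e_k\otimes f_\ell$ form an orthonormal family. Because the left side is a rank-one projection, exactly one coefficient $\mu_k\nu_\ell$ can be nonzero, and it must equal $1$; as each $\mu_k,\nu_\ell\le 1$ this forces $\mu_{k_0}=\nu_{\ell_0}=1$ and all other eigenvalues to vanish, so both $\rho_1$ and $\rho_2$ are rank one, say $\rho_1=\ket{\phi_1}\bra{\phi_1}$ and $\rho_2=\ket{\phi_2}\bra{\phi_2}$. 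Then $\ket{\eta}\bra{\eta}$ and $\ket{\phi_1\otimes\phi_2}\bra{\phi_1\otimes\phi_2}$ have the same one-dimensional range, so $\eta$ and $\phi_1\otimes\phi_2$ agree up to a unimodular phase; absorbing that phase into $\phi_1$ shows $\eta$ is factorized.

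For part (b), the reverse direction is immediate, since a factorized state is a one-term convex combination of factorized states and hence separable. For the forward direction I would invoke extremality: a pure state $\ket{\eta}\bra{\eta}$ is an extreme point of $\sscript(H)$, because any state with a nontrivial spectral resolution is a proper convex combination of distinct pure states. If $\ket{\eta}\bra{\eta}=\sum_i\lambda_i\rho_i\otimes\sigma_i$ is a separable decomposition with every $\lambda_i>0$, then each summand $\rho_i\otimes\sigma_i$ lies in $\sscript(H)$, so extremality forces $\rho_i\otimes\sigma_i=\ket{\eta}\bra{\eta}$ for every $i$. In particular $\ket{\eta}\bra{\eta}$ is a single factorized state, and by part (a) this is equivalent to $\eta$ being a factorized vector.

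The step I expect to require the most care is the forward direction of (a): one must rule out that cancellation among higher-rank factors conspires to produce a rank-one operator. The orthogonality of the product eigenvectors $e_k\otimes f_\ell$ is exactly what prevents this, so that the rank genuinely multiplies and no eigenvalue can arise from interference. Once that is secured, the only remaining bookkeeping is the phase ambiguity between $\eta$ and $\phi_1\otimes\phi_2$, which is harmless since a global phase does not affect factorization of a vector state.
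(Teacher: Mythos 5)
Your proposal is correct in substance but takes a somewhat different route from the paper in both parts, and one of your justifications is stated backwards. For the forward direction of (a), the paper does not use spectral resolutions: it squares the identity $\ket{\eta}\bra{\eta}=\rho_1\otimes\rho_2$ to get $\rho_1^2\otimes\rho_2^2=\rho_1\otimes\rho_2$, concludes that $\rho_1,\rho_2$ are projections, and then uses $\rmtr(\rho_1)\rmtr(\rho_2)=1$ to force both to be rank one. Your argument via the joint spectral resolution $\sum_{k,\ell}\mu_k\nu_\ell\ket{e_k\otimes f_\ell}\bra{e_k\otimes f_\ell}$ reaches the same conclusion and is arguably tighter: the paper's step from $\rho_1^2\otimes\rho_2^2=\rho_1\otimes\rho_2$ to $\rho_1^2=\rho_1$ silently uses that tensor factors are determined only up to a reciprocal scalar pair, whereas your rank-counting over an orthonormal product family handles the normalization explicitly. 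You are also right to note the phase ambiguity, which the paper elides.

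For (b), your extremality argument and the paper's are the same fact in different clothing, but your stated reason for extremality proves the wrong implication. You write that $\ket{\eta}\bra{\eta}$ is extreme ``because any state with a nontrivial spectral resolution is a proper convex combination of distinct pure states'': that shows every extreme point is pure, not that every pure state is extreme, which is what you need. The correct justification is exactly the paper's argument: from $\ket{\eta}\bra{\eta}=\sum_i\lambda_i\rho_i\otimes\sigma_i$ with $\lambda_i>0$ one gets $0\le\lambda_i\rho_i\otimes\sigma_i\le\ket{\eta}\bra{\eta}$, and a positive operator dominated by a rank-one projection is a scalar multiple of it, so taking traces gives $\rho_i\otimes\sigma_i=\ket{\eta}\bra{\eta}$ for every $i$. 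If you intend to invoke extremality of pure states as a black box, cite it as such; if you intend to prove it, you must supply this domination argument (or an equivalent one), since the spectral-resolution observation you give does not do the job.
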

\begin{proof}
(a)\enspace If $\eta\in H$ is factorized, then $\eta =\eta _1\otimes\eta _2$, $\eta _i\in H_i$, $i=1,2$. Hence,
\begin{equation*}
\ket{\eta}\bra{\eta}=\ket{\eta _1\otimes\eta _2}\bra{\eta _1\otimes\eta _2}
  =\ket{\eta _1}\bra{\eta _1}\otimes\ket{\eta _2}\bra{\eta _2}
\end{equation*}
so $\ket{\eta}\bra{\eta}$ is factorized. Conversely, if $\ket{\eta}\bra{\eta}$ is factorized, then
$\ket{\eta}\bra{\eta}=\rho _1\otimes\rho _2$, $\rho _i\in\sscript (H)$, $i=1,2$. Since
\begin{equation*}
\rho _1^2\otimes\rho _1^2=(\rho _1\otimes\rho _1)^2=\rho _1\otimes\rho _2
\end{equation*}
we have that $\rho _1^2=\rho _1$ and $\rho _2^2=\rho _2$ so $\rho _1$ and $\rho _2$ are projections. Since
\begin{equation*}
\rmtr (\rho _1)\rmtr (\rho _2)=\rmtr (\rho _1\otimes\rho _2)=1
\end{equation*}
we have that $\rmtr (\rho _1)=\rmtr (\rho _2)=1$ so $\rho _1$ and $\rho _2$ are pure states. Hence,
$\rho _1=\ket{\phi _1}\bra{\phi _1}$, $\rho _2=\ket{\phi _2}\bra{\phi _2}$, $\phi _i\in H_i$, $i=1,2$ and we have that
\begin{equation*}
\ket{\eta}\bra{\eta}=\ket{\phi _1}\bra{\phi _1}\otimes\ket{\phi _2}\bra{\phi _2}
  =\ket{\phi _1\otimes\phi _2}\bra{\phi _1\otimes\phi _2}
\end{equation*}
Thus, $\eta =\phi _1\otimes\phi _2$ so $\eta$ is factorized.\newline
(b)\enspace If $\ket{\eta}\bra{\eta}$ is factorized, then clearly $\ket{\eta}\bra{\eta}$ is separable. Conversely, if $\ket{\eta}\bra{\eta}$ is separable, then there are $\lambda _i>0$ with
\begin{equation*}
\ket{\eta}\bra{\eta}=\sum _{i=1}^n\lambda _i\rho _i\otimes\sigma _i
\end{equation*}
We concluded that $\lambda _i\rho _i\otimes\sigma _i\le\ket{\eta}\bra{\eta}$ and since
$\ket{\eta}\bra{\eta}$ is one-dimensional we have that
$\lambda _i\rho _i\otimes\sigma _i=\lambda\ket{\eta}\bra{\eta}$ for some $\lambda\in\sqbrac{0,1}$. Taking the trace gives $\lambda _i=\lambda$ so $\ket{\eta}\bra{\eta}=\rho _i\otimes\sigma _i$, $i=1,\ldots ,n$. Therefore, $\ket{\eta}\bra{\eta}=\rho _1\otimes\sigma _1$ so $\ket{\eta}\bra{\eta}$ is factorized.
\end{proof}

Let $\ascript =\brac{\phi _i}$, $\bscript =\brac{\psi _i}$ be orthonormal bases (contexts) for $H_1$ and $H_2$, respectively. If $\underlambda\in M$ is a probability measure, we call $(\underlambda ,\ascript ,\bscript )$ an
\textit{entanglement} and we call $(M,\ascript ,\bscript )$ an \textit{entanglement system}. We assume without loss of generality that $\dim H_1=\dim H_2=n$. We can do this because if $\dim H_1<\dim H_2$, say, then we can enlarge $H_1$ to $\dim H_2$ and no harm is done. Moreover, we assume that
$\rmsupp (\underlambda )\subseteq\brac{1,2,\ldots ,n}$. Corresponding to an entanglement $E=(\underlambda ,\ascript ,\bscript )$ we have a vector state
\begin{equation*}
\psi _E=\sum\sqrt{\lambda _i\,}\phi _i\otimes\psi _i\in H_1\otimes H_2
\end{equation*}
a pure state $P_E=P_{\psi _E}$, a separable state
\begin{equation*}
\rho _E=\sum\lambda _iP_{\phi _i\otimes\psi _i}=\sum\lambda _iP_{\phi _i}\otimes P_{\psi _i}
\end{equation*}
and an \textit{entanglement operator}
\begin{align*}
B_E&=\sum _{i\ne j}\sqrt{\lambda _i\lambda _j\,}\,\ket{\phi _i\otimes\psi _i}\bra{\phi _j\otimes\psi _j}\\
  &=\sum _{i\ne j}\sqrt{\lambda _i\lambda _j\,}\,\ket{\phi _i}\bra{\phi _j}\otimes\ket{\psi _i}\bra{\psi _j}
\end{align*}
From Section~1, since $\underlambda\in M$ we have the entanglement number $e(\underlambda )$. We use this to define the \textit{entanglement number}
\begin{equation*}
e(\psi _E)=e(P_E)=e(\underlambda )
\end{equation*}

Conversely, if $\psi\in H_1\otimes H_2$ is a vector state, then there exists a \textit{Schmidt decomposition} consisting of an entanglement $(\underlambda ,\ascript ,\bscript )$ where $\underlambda\in M$ is unique and
$\psi =\sum\sqrt{\lambda _i\,}\phi _i\otimes\psi _i$ \cite{cla05,hz12,nc00}. In this way, any vector state $\psi$ determines an entanglement $E=(\underlambda ,\ascript ,\bscript )$ so that $\psi =\psi _E$ although $\ascript$ and $\bscript$ need not be unique. It is easy to check that
\begin{equation*}
P_E=\ket{\psi _E}\bra{\psi _E}=\rho _E+B_E
\end{equation*}
and $B_E$ is a self-adjoint, traceless operator. We consider $\rho _E$ as the non-entangled part of $P_E$ and $B_E$ as describing the entangled part. Letting $\dscript =\ascript\otimes\bscript =\brac{\phi _i\otimes\psi _j}$ be the corresponding orthonormal basis (context) for $H=H_1\otimes H_2$ we have that $\rho _E=L_\dscript (P_E)$ and $B_E=R_\dscript (P_E)$ where $L_\dscript$ and $R_\dscript$ are the context map and residual map of Section~2.

Considering the Hilbert-Schmidt norm $\doubleab{B_E}$ we see that $\doubleab{B_E}=\doubleab{P_E-\rho _E}$ gives a measure of the entanglement of $P_E$. Thus, if $\doubleab{B_E}$ is small, then $P_E$ is close to
$\rho _E$ and is less entangled and when $\doubleab{B_E}$ is large, then $P_E$ is more entangled. The next result shows that our three entanglement measures coincide.

\begin{thm}    % Theorem 3.2
\label{thm32}
$c_\dscript (B_E)=\doubleab{B_E}=e(\psi _E)$
\end{thm}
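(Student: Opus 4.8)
The plan is to prove the two equalities in turn, each by a short direct computation built on the results already at hand.

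For the first equality $c_\dscript(B_E) = \doubleab{B_E}$, I would apply Theorem~\ref{thm23} with the operator $B_E$ and the product context $\dscript = \{\phi_i \otimes \psi_j\}$, which gives $c_\dscript(B_E) = \doubleab{R_\dscript(B_E)}$. The point then is to show that $B_E$ is already purely residual, i.e.\ $R_\dscript(B_E) = B_E$. This follows because $B_E$ is a sum of terms $\ket{\phi_i \otimes \psi_i}\bra{\phi_j \otimes \psi_j}$ with $i \ne j$: such a term sends $\phi_j \otimes \psi_j$ to a multiple of $\phi_i \otimes \psi_i$ and annihilates every product basis vector $\phi_a \otimes \psi_b$ with $a \ne b$. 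Hence each diagonal matrix element $\elbows{\phi_a \otimes \psi_b, B_E(\phi_a \otimes \psi_b)}$ vanishes, so $L_\dscript(B_E) = 0$ and therefore $R_\dscript(B_E) = B_E$, giving the first equality.

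For the second equality $\doubleab{B_E} = e(\psi_E)$, I would compute the Hilbert-Schmidt norm directly. Since the vectors $\phi_i \otimes \psi_i$ are orthonormal, the rank-one operators $\ket{\phi_i \otimes \psi_i}\bra{\phi_j \otimes \psi_j}$ are orthonormal in the Hilbert-Schmidt inner product, so the squared norm of $B_E$ is the sum of the squared moduli of its coefficients:
\begin{equation*}
\doubleab{B_E}^2 = \sum_{i \ne j} \paren{\sqrt{\lambda_i \lambda_j}}^2 = \sum_{i \ne j} \lambda_i \lambda_j.
\end{equation*}
By the first expression for the entanglement number in \eqref{eq11}, the right-hand side equals $e(\underlambda)^2$, and since $e(\psi_E) = e(\underlambda)$ by definition, taking square roots completes the second equality.

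I do not expect a serious obstacle: the argument is essentially an assembly of Theorem~\ref{thm23} and the definition \eqref{eq11}. The one place that warrants care is the claim $R_\dscript(B_E) = B_E$, where one must check that the off-diagonal structure in the single index ($i \ne j$) genuinely corresponds to off-diagonal entries in the full product basis $\dscript$. This holds precisely because the only basis vectors on which $B_E$ acts nontrivially are the matched vectors $\phi_a \otimes \psi_a$, and these are sent to combinations of other matched vectors $\phi_i \otimes \psi_i$ with $i \ne a$, so no diagonal entry can survive. Once this is verified, the remaining Hilbert-Schmidt computation and the appeal to \eqref{eq11} are routine.
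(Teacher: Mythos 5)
Your proposal is correct and follows essentially the same route as the paper: Theorem~\ref{thm23} handles the first equality (and your explicit check that $L_\dscript(B_E)=0$, hence $R_\dscript(B_E)=B_E$, fills in a step the paper leaves implicit), while the second equality is the same trace computation, which the paper carries out by squaring $B_E$ and you obtain more quickly from the Hilbert--Schmidt orthonormality of the operators $\ket{\phi_i\otimes\psi_i}\bra{\phi_j\otimes\psi_j}$ together with \eqref{eq11}. No gaps.
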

\begin{proof}
It follows from Theorem~\ref{thm23} that $c_\dscript (B_E)=\doubleab{B_E}$. To show that
$\doubleab{B_E}=e(\psi _E)$ we have that
\begin{align*}
B_E^2&=\sqbrac{\sum _{i\ne j}\sqrt{\lambda _i\lambda _j\,}\,\ket{\phi _i\otimes\psi _i}\bra{\phi _j\otimes\psi _j}}
  \sqbrac{\sum _{r\ne s}\sqrt{\lambda _r\lambda _s\,}\,\ket{\phi _r\otimes\psi _r}\bra{\phi _s\otimes\psi _s}}\\
  &=\sum _{i\ne j}\sum _s\sqrt{\lambda _i\lambda _s\,}\ket{\phi _i\otimes\psi _i}\bra{\phi _s\otimes\psi _s}\\
  &=\sum _{i,s}(1-\lambda _i)\sqrt{\lambda _i\lambda _s\,}\,\ket{\phi _i\otimes\psi _i}\bra{\phi _s\otimes\psi _s}
\end{align*}
Hence,
\begin{equation*}
\rmtr (B_E^2)=\sum (1-\lambda _i)\lambda _i=1-\sum\lambda _i^2=1-\doubleab{\underlambda}^2
\end{equation*}
We conclude that
\begin{equation*}
\doubleab{B_E}=\sqbrac{\rmtr (B_E^2)}^{1/2}=\sqrt{1-\doubleab{\underlambda}^2\,}=e(\underlambda )
  =e(\psi _E)\qedhere
\end{equation*}
\end{proof}

Let $E=(\underalpha ,\ascript ,\bscript )$ and $F=(\underbeta ,\ascript ,\bscript )$ be entanglements belonging to the same entanglement system $(M,\ascript ,\bscript )$. We have the corresponding vector states
$\psi _E=\sum\sqrt{\alpha _i\,}\phi _i\otimes\psi _i$, $\psi _F=\sum\sqrt{\beta _i\,}\phi _i\otimes\psi _i$. For
$\lambda\in (0,1)$ we have the entanglement
\begin{equation*}
G=\paren{\lambda\underalpha +(1-\lambda )\underbeta ,\ascript ,\bscript}
\end{equation*}
and vector state
\begin{equation*}
\psi _G=\sum\sqrt{\lambda\alpha _i+(1-\lambda )\beta _i\,}\,\,\phi _i\otimes\psi _i
\end{equation*}
By Theorem~\ref{thm12} we have that
\begin{equation*}
e(\psi _G)=e\sqbrac{\lambda\underalpha +(1-\lambda )\underbeta}
  \ge\lambda e(\underalpha )+(1-\lambda )e(\underbeta )
  =\lambda e(\psi _E)+(1-\lambda )e(\psi _F)
\end{equation*}

Our entanglement number is related to entanglement robustness \cite{bcp18,cla05,pv07,ste03,vp98,vt99,wg03},
but there are important differences and the motivation is not the same. We leave a detailed comparison to later work.

\begin{exam}{5}  % Example 5
Let $H=\complex ^2\otimes\complex ^2$ and let $\psi\in H$ be a vector state. By the Schmidt decomposition, there are numbers $\lambda _1,\lambda _2\ge 0$ with $\lambda _1+\lambda _2=1$ and bases
$\ascript =\brac{\phi _1,\phi _2}$, $\bscript =\brac{\psi _1,\psi _2}$ of $\complex ^2$ such that
\begin{equation*}
\psi =\sqrt{\lambda _1\,}\,\phi _1\otimes\psi _1+\sqrt{\lambda _2\,}\,\phi _2\otimes\psi _2
\end{equation*}
We have that $P_\psi =\rho _\psi +B_\psi$ where
\begin{align*}
\rho _\psi&=\lambda _1\ket{\phi _1\otimes\psi _1}\bra{\phi _1\otimes\psi _1}
  +\lambda _2\ket{\phi _2\otimes\psi _2}\bra{\phi _2\otimes\psi _2}\\
  \intertext{and}
B_\psi&=\sqrt{\lambda _1\lambda _2\,}\,\sqbrac{\ket{\phi _1\otimes\psi _1}\bra{\phi _2\otimes\psi _2}%
  +\ket{\phi _2\otimes\psi _2}\bra{\phi _1\otimes\psi _1}}
\end{align*}
We see that $\rho _\psi$ is a separable state and the entanglement operator $B_\psi$ is self-adjoint and traceless. The eigenvalues of $B_\psi$ are $0,0,\sqrt{\lambda _1\lambda _2\,},-\sqrt{\lambda _1\lambda _2\,}$. The corresponding eigenvectors are $\phi _1\otimes\psi _2$, $\phi _2\otimes\psi _1$ which are factorized and
\begin{equation*}
\tfrac{1}{\sqrt{2\,}}(\phi _1\otimes\psi _1+\phi _2\otimes\psi _2),\quad
  \tfrac{1}{\sqrt{2\,}}(\phi _1\otimes\psi _1-\phi _2\otimes\psi _2)
\end{equation*}
which are entangled. The Hilbert-Schmidt norm of $B_\psi$ is
\begin{equation*}
\doubleab{B_\psi}=\sqrt{2\lambda _1\lambda _2\,}=e(\psi )\rlap{$\qquad\qquad \Box$}
\end{equation*}
\end{exam}

\begin{exam}{6}  % Example 6
Let $E=(M,\ascript ,\bscript )$ be an entanglement system with $\ascript =\brac{\phi _i}$, $\bscript =\brac{\psi _j}$. Consider the following vector states in $E$
\begin{align*}
\alpha&=\tfrac{1}{\sqrt{2\,}}\,\phi _1\otimes\psi _1+\tfrac{1}{\sqrt{2\,}}\,\phi _2\otimes\psi _2\\
\beta&=\tfrac{1}{\sqrt{3\,}}\,\phi _1\otimes\psi _1+\tfrac{1}{\sqrt{3\,}}\,\phi _2\otimes\psi _2
  +\tfrac{1}{\sqrt{3\,}}\,\phi _3\otimes\psi _3\\
  \gamma&=\tfrac{1}{\sqrt{2\,}}\,\phi _1\otimes\psi _1+\tfrac{1}{\sqrt{3\,}}\,\phi _2\otimes\psi _2
  +\tfrac{1}{\sqrt{6\,}}\,\phi _3\otimes\psi _3\\
  \delta&=\tfrac{1}{3}\,\phi _1\otimes\psi _1+\tfrac{1}{3}\,\phi _2\otimes\psi _2
  +\sqrt{\tfrac{7}{9}\,}\,\phi _3\times\psi _3
\end{align*}
All of these states are entangled and as in Example~1 we have $e(\alpha )=1/\sqrt{2}$, $e(\beta )=\sqrt{2/3\,}$,
$e(\gamma )=\sqrt{11/18\,}$, $e(\delta )=\sqrt{30\,}/9$ and we have that
\begin{equation*}
e(\delta )<e(\alpha )<e(\gamma )<e(\beta )\rlap{$\qquad\qquad \Box$}
\end{equation*}
\end{exam}

\begin{exam}{7}  % Example 7
If $\ascript =\brac{\phi _i}$ is an orthonormal basis for $H$, the corresponding \textit{symmetric-antisymmetric} basis for $H\otimes H$ is
\begin{equation*}
\ascript ^{SA}=\brac{\phi _i\otimes\phi _i, \tfrac{1}{\sqrt{2\,}}\,(\phi _i\otimes\phi _j+\phi _j\otimes\phi _i),%
  \tfrac{1}{\sqrt{2\,}}(\phi _i\otimes\phi _j-\phi _j\otimes\phi _i), i<j}  
\end{equation*}
The first two types are symmetric and the last type are antisymmetric. There are $n(n+1)/2$ symmetric and $n(n-1)/2$ antisymmetric states. The entanglement number for the first type is $0$ and the others are
$1/\sqrt{2\,}$. We have that $P_{\phi _i\otimes\phi _i}=P_{\phi _i}\otimes P_{\phi _i}$ is factorized and
\begin{align*}
P_{\tfrac{1}{\sqrt{2\,}}\,(\phi _i\otimes\phi _j+\phi _j\otimes\phi _i)}
  &=\tfrac{1}{2}\,\ket{\phi _i}\bra{\phi _i}\otimes\ket{\phi _j}\bra{\phi _j}
  +\tfrac{1}{2}\,\ket{\phi _j}\bra{\phi _j}\otimes\ket{\phi _i}\bra{\phi _i}\\
  &\qquad +\tfrac{1}{2}\,\ket{\phi _j}\bra{\phi _i}\otimes\ket{\phi _i}\bra{\phi _j}
  +\tfrac{1}{2}\,\ket{\phi _i}\bra{\phi _j}\otimes\ket{\phi _j}\bra{\phi _i}\\
  &=\tfrac{1}{2}\,P_{\phi _i}\otimes P_{\phi _j}+\tfrac{1}{2}\,P_{\phi _j}\otimes P_{\phi _i}
  +\rmre\paren{\ket{\phi _j}\bra{\phi _i}\otimes\ket{\phi _i}\bra{\phi _j}}
\end{align*}

We can write this as $A+B$ where $A$ is the separable state
\begin{equation*}
A=\tfrac{1}{2}\,P_{\phi _i}\otimes P_{\phi _j}+\tfrac{1}{2}\,P_{\phi _j}\otimes P_{\phi _i}
\end{equation*}
and $B$ is the entanglement operator. We also have
\begin{equation*}
P_{\tfrac{1}{\sqrt{2\,}}\,(\phi _i\otimes\phi _j-\phi _j\otimes\phi _i)}=A-B\rlap{$\qquad\qquad \Box$}
\end{equation*}
\end{exam}

\begin{exam}{8}  % Example 8
Let $H=H_1\otimes H_2$ with $\dim H_1=\dim H_2=n$ and let $\psi\in H$ be the maximally entangled vector given by
\begin{equation*}
\psi =\tfrac{1}{\sqrt{n\,}}\,\sum\phi _i\otimes\psi _i
\end{equation*}
where $\ascript =\brac{\phi _i}$, $\bscript =\brac{\psi _i}$ are orthonormal bases for $H_1,H_2$, respectively. Letting $\lambda _i=1/n$, $i=1,2,\ldots ,n$ and $E=(\underlambda ,\ascript ,\bscript )$ we have that $\psi =\psi _E$ with corresponding pure state $P_E$ and entanglement operator
\begin{equation*}
B_E=\tfrac{1}{n}\,\sum\ket{\phi _i}\bra{\phi _j}\otimes\ket{\psi _i}\bra{\psi _j}
\end{equation*}
Letting $\dscript =\brac{\phi _i\otimes\psi _j}$ be the resulting orthonormal basis for $H$ and $R_\dscript$ be the corresponding residual map we have $B_E=R_\dscript (P_E)$. It follows from Theorem~\ref{thm24} that the nonzero eigenvalues of $B_E$ are $1-\tfrac{1}{n}$ and $-\tfrac{1}{n}$. The eigenvalue $1-\tfrac{1}{n}$ has multiplicity $1$ and corresponding eigenvector $\psi$ while the eigenvalue $-\tfrac{1}{n}$ has multiplicity $n-1$ whose eigenspace is the subspace of $H$ generated by $\brac{\phi _i\otimes\psi _j\colon i\ne j}$ and orthogonal to $\phi$.\hfill\qedsymbol
\end{exam}

Until now we have considered the entanglement number for a pure state $P_\phi$. For the remainder of this article we shall discuss mixed states. If $\rho$ is a mixed state on $H$ that is not pure, then $\rho$ possesses an uncountably infinite number of decompositions $\rho =\sum\lambda _iP_i$, $\lambda _i>0$, $\sum\lambda _i=1$ where $P_i$ are pure states \cite{hz12}. Also, $\rho$ has a \textit{spectral decomposition} $\rho=\sum\mu _iQ_i$, $\mu _i>0$, $\sum\mu _i=1$, where $Q_i$ are mutually orthogonal pure states. The $\mu _i$ are the nonzero eigenvalues of $\rho$ and the ranges of $Q_i$ are the corresponding eigenvectors of $\rho$. The next example is based on Example~6.13 \cite{hz12}.

\begin{exam}{9}  % Example 9
Let $H=\complex ^2\otimes\complex ^2$, let $\brac{\phi _1,\phi _2}$ be an orthonormal basis for $\complex ^2$ and define $\phi =\tfrac{1}{\sqrt{2\,}}(\phi _1+\phi _2)$. We now consider the separable state
\begin{equation*}
\rho =\tfrac{1}{2}\,\paren{\ket{\phi\otimes\phi}\bra{\phi\otimes\phi}%
  +\ket{\phi _1\otimes\phi _1}\bra{\phi _1\otimes\phi _1}}
\end{equation*}
The eigenvalues of $\rho$ are $0$ (multiplicity 2), $1/4$ and $3/4$. The eigenvectors for $0$ are
\begin{equation*}
\psi _1=\tfrac{1}{\sqrt{2\,}}(\phi _2-\phi _1)\otimes\phi _2,\qquad
  \psi _2=\tfrac{1}{\sqrt{6\,}}\sqbrac{(\phi _1+\phi _2)\otimes\phi _2-2\phi _2\otimes\phi _1}
\end{equation*}
The eigenvectors for $1/4$ and $3/4$ are
\begin{align*}
\psi _3&=\tfrac{1}{2\sqrt{3\,}}\sqbrac{(3\phi _1+\phi _2)\otimes\phi _1+(\phi _1+\phi _2)\otimes\phi _2}\\
\psi _4&=\tfrac{1}{2}\,\sqbrac{(\phi _2-\phi _1)\otimes\phi _1+(\phi _1+\phi _2)\otimes\phi _2}
\end{align*}
The unique spectral decomposition of $\rho$ becomes
\begin{equation}                % equation (3.1)
\label{eq31}
\rho =\tfrac{1}{4}\,P_{\psi _3}+\tfrac{3}{4}\,P_{\psi _4}
\end{equation}
Notice that $\psi _3$ and $\psi _4$ are entangled. This gives an example of a separable state whose unique spectral decomposition consists of entangled pure states.\hfill\qedsymbol
\end{exam}

Example~9 shows that a spectral decomposition cannot be used to determine an entanglement number for a mixed state. Indeed, in \eqref{eq31} since $\rho$ is separable the entanglement number for $\rho$ should be zero, yet the entanglement number for $P_{\psi _3}$ and $P_{\psi _4}$ are positive.

We now define the entanglement number for a mixed state $\rho$. Suppose $\rho =\sum\lambda _iP_i$,
$\lambda _i>0$, $\sum\lambda _i=1$ is a decomposition of $\rho$ into pure states $P_i$, where $P_i\ne P_j$,
$i\ne j$. Let $\ascript =\brac{P_i}$ and define
\begin{equation*}
e_\ascript (\rho )=\sum\lambda _ie(P_i)
\end{equation*}
We define the \textit{entanglement number} $e(\rho )$ by
\begin{equation}                % equation (3.2)
\label{eq32}
e(\rho )=\inf _\ascript\sqbrac{e_\ascript (\rho )}
\end{equation}
Since a pure state has the decomposition $P=P$, \eqref{eq32} reduces to the usual definition of entanglement number for pure states. We say that the infimum is \eqref{eq32} is \textit{attained} if there is an $\ascript$ such that $e(\rho )=e_\ascript (\rho )$. It is an open problem whether the infimum is always attained.

\begin{thm}    % Theorem 3.3
\label{thm33}
A state $\rho$ is separable if and only if $e(\rho )$ is attained and $e(\rho )=0$.
\end{thm}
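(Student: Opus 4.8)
The plan is to prove the two implications separately, the common engine being the pure-state fact that for any vector state $\psi$ we have $e(\psi)=0$ if and only if $\psi$ is factorized. This is immediate from the machinery already in place: writing the Schmidt decomposition $\psi=\sum\sqrt{\lambda_i\,}\phi_i\otimes\psi_i$ gives $e(\psi)=e(\underlambda)$, and by Theorem~\ref{thm11}(a) we have $e(\underlambda)=0$ exactly when $\underlambda$ is a point measure, i.e.\ when a single Schmidt coefficient equals $1$ and the decomposition collapses to one term $\psi=\phi_1\otimes\psi_1$. By Lemma~\ref{lem31}(a) this is the same as saying the pure state $\ket{\psi}\bra{\psi}$ is factorized.

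For the forward direction, suppose $\rho$ is separable, so $\rho=\sum_k\mu_k\,\tau_k\otimes\omega_k$ with $\tau_k\in\sscript(H_1)$, $\omega_k\in\sscript(H_2)$, $\mu_k>0$. Spectrally decomposing each factor, $\tau_k=\sum_a p_{ka}\ket{\xi_{ka}}\bra{\xi_{ka}}$ and $\omega_k=\sum_b q_{kb}\ket{\zeta_{kb}}\bra{\zeta_{kb}}$, and distributing the tensor product, I would obtain a decomposition of $\rho$ into the pure states $\ket{\xi_{ka}\otimes\zeta_{kb}}\bra{\xi_{ka}\otimes\zeta_{kb}}$, each of which is factorized and hence has entanglement number $0$ by the engine above. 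After merging any coincident pure states (so as to meet the requirement $P_i\ne P_j$) and relabelling the positive weights, this yields a context $\ascript$ with $e_\ascript(\rho)=\sum_i\lambda_i e(P_i)=0$. Since $e_\ascript(\rho)\ge 0$ for every admissible $\ascript$, the infimum in \eqref{eq32} equals $0$ and is attained.

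For the reverse direction, suppose the infimum is attained with $e(\rho)=0$. Then there is a context $\ascript=\brac{P_i}$ and weights $\lambda_i>0$, $\sum\lambda_i=1$, with $\rho=\sum\lambda_i P_i$ and $e_\ascript(\rho)=\sum\lambda_i e(P_i)=0$. As each summand is nonnegative and each $\lambda_i>0$, every $e(P_i)$ must vanish, so by the engine above each $P_i$ is a factorized pure state. Thus $\rho=\sum\lambda_i P_i$ exhibits $\rho$ as a convex combination of factorized states, and $\rho$ is separable.

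The step I expect to carry the real content is the reverse direction, and specifically the essential use of the attainment hypothesis: an infimum of $0$ produced only by a sequence of decompositions $\ascript_n$ with $e_{\ascript_n}(\rho)\to 0$ would not let me conclude that a single decomposition consists of factorized pure states, and hence would not force separability. Attainment is precisely what converts ``entanglement number zero'' into an honest factorized pure-state decomposition, and it is why the hypothesis cannot simply be $e(\rho)=0$. The forward direction is routine once one observes that tensor products of pure states are factorized pure states; the only mild bookkeeping is coalescing coincident terms to respect the $P_i\ne P_j$ convention, which leaves the value $e_\ascript(\rho)=0$ unchanged.
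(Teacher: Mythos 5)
Your proof is correct and follows essentially the same route as the paper: exhibit a decomposition into factorized pure states for the forward direction, and use attainment plus positivity of the weights to force every $e(P_i)=0$ in the reverse direction. The only difference is that you spell out two steps the paper leaves implicit --- the spectral decomposition of each $\tau_k\otimes\omega_k$ into factorized pure states, and the equivalence $e(\psi)=0\Leftrightarrow\psi$ factorized via the Schmidt decomposition, Theorem~\ref{thm11}(a) and Lemma~\ref{lem31}(a) --- which is a welcome addition rather than a deviation.
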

\begin{proof}
If $\rho$ is separable we have that $\rho =\sum\lambda _iP_i$ where $P_i$ are factorized pure states. Since $e(P_i)=0$ for all $i$, we have that $e_\ascript (\rho )=0$ for $\ascript =\brac{P_i}$. Hence, $e(\rho )=0$.
Conversely, suppose $e(\rho )$ is attained at $\ascript =\brac{P_i}$ and $e(\rho )=e_\ascript (\rho )=0$.
Since $\rho =\sum\lambda _iP_i$, $\lambda _i>0$, $\sum\lambda _i=1$ and
\begin{equation*}
\sum\lambda _ie(P_i)=e(\rho )=0
\end{equation*}
we conclude that $e(P_i)=0$ for all $i$. It follows that $P_i$ is factorized for all $i$ and hence $\rho$ is separable.
\end{proof}

It follows that if $\rho$ is separable, then $e(\rho )=0$ and if $e(\rho )>0$ or is not attained, then $\rho$ is entangled.

\end{document}